\theoremstyle{plain}
\newtheorem{thm}{Theorem}
\newtheorem{lemma}{Lemma}
\theoremstyle{definition}
\newtheorem{definition}{Definition}
\theoremstyle{remark}
\newtheorem{remark}{Remark}
\newcommand{\fp}{\mathbb{F}_p}
\newcommand{\E}{\mathbf{E}}
\newcommand{\ecp}{\E(\fp)}
\newcommand{\ffp}{\fp(\E)}
\newcommand{\Z}{\mathbb{Z}}
\newcommand{\Q}{\mathbb{Q}}
\renewcommand{\c}{\mathbf{c}}
\newcommand{\e}{\mathbf{e}}
\begin{document}

\title{Predicting the elliptic curve congruential generator}

\author{L\'aszl\'o M\'erai}

\address{Johann Radon Institute for Computational and Applied Mathematics, Austrian Academy of Sciences  Austrian Academy of Sciences,
  Altenbergerstr.\ 69, 4040 Linz, Austria}

\email{merai@cs.elte.hu}

\begin{abstract}
Let $p$ be a prime and let $\E$ be an elliptic curve defined over the finite field $\fp$ of $p$ elements. For a point $G\in\ecp$ the elliptic curve congruential generator (with respect to the first coordinate) is a sequence  $(x_n)$ defined by the relation $x_n=x(W_n)=x(W_{n-1}\oplus G)=x(nG\oplus W_0)$, $n=1,2,\dots$, where $\oplus$ denotes the group operation in $\E$ and $W_0$ is an initial point. In this paper, we show that if some consecutive elements of the sequence $(x_n)$ are given as integers, 
then one can compute in polynomial time an elliptic curve congruential generator (where the curve possibly defined over the rationals or over a residue ring) such that the generated sequence is identical to $(x_n)$ in the revealed segment. It turns out that in practice,  all the secret parameters, and thus the whole sequence $(x_n)$, can be computed from eight consecutive elements, even if the prime and the elliptic curve are private.
\end{abstract}

\subjclass[2010]{Primary 11Y50, 11Y55, 11T71, 14H52, 94A60}

\keywords{elliptic curve, congruential generator, cryptography}

 \maketitle

 \let\thefootnote\relax\footnote{The final publication is available at Springer via \url{http://dx.doi.org/ 10.1007/s00200-016-0303-x}}
 
\section{Introduction}

For a prime $p$, denote by $\fp$ the field of $p$ elements and always assume that it is represented by the first $p$-many non-negative integers $\{0,1,\dots, p-1\}$.

Let $\E$ be an elliptic curve defined over $\fp$ given by an \emph{affine Weierstrass equation}, which for $\gcd(p,6)=1$ takes the form
\begin{equation}\label{eq:weierstrass}
 y^2=x^3+Ax+B,
\end{equation}
for some $A,B\in \fp$ with non-zero discriminant $4A^3+27B^2\neq 0$.

The $\fp$-rational points $\ecp$ of $\E$ form an Abelian group (with respect to the usual addition  denoted by $\oplus$) with the point at infinity $\mathcal{O}$ as the neutral element. Let $x(\cdot)$ and $y(\cdot)$ be the coordinate functions, then for a point $P\in\ecp$, $P\neq \mathcal{O}$ with affine components $P=(x_P,y_P)$ we have $x(P)=x_P$ and $y(P)=y_P$.

For a given point $G\in\ecp$ and \emph{initial point} $W_0\in\ecp$ the \emph{elliptic curve congruential generator}  is the sequence $(W_n)$ of points of $\ecp$ satisfying the recurrence relation 
\begin{equation*}
 W_n=G \oplus W_{n-1}=nG\oplus W_0, \quad n=1,2,\dots
\end{equation*}
We also define the \emph{elliptic curve congruential generator with respect to the first coordinate} as the sequence $(x_n)$ in $\fp$ as
\begin{equation}\label{eq:def-x}
 x_n=x(W_n)=x(nG\oplus W_0), \quad n=1,2,\dots
\end{equation}

The elliptic curve congruential generator has been widely studied, many positive results have been proven about its randomness, see \cite{BD,Clegendre,CGP,CLX,ElMahassniShparlinski,GBS,HessShparlinski,HHF,Liu,LWZ,merai-1,merai-2,TopWinterhof} and see also the survey paper \cite{shparlinski-EC-survey}.
In particular, El Mahassni and Shparlinski~\cite{ElMahassniShparlinski} showed that  $(W_n)$, and so the sequence $(x_n)$,  is well-distributed.  Hess and Shparlinski~\cite{HessShparlinski}, and Topuzo\u glu and Winterhof~\cite{TopWinterhof} provided lower bounds to the linear complexity profile of the sequence $(x_n)$.

However, it is clear, that when the curve $\E$ is given, the sequence $(W_n)$ is predictable from two consecutive points $W_n$, $W_{n+1}$. In \cite{GutierrezIbeas}, Gutierrez and Ibeas showed that when the prime $p$ and $G$ are known, then the sequence $(W_n)$ is predictable even if just an approximation of $W_n$, $W_{n+1}$ are revealed (even if the curve $\E$ is private). These show that the security of the point sequence $(W_n)$ is not well-established. However these attacks use the assumption
that the prime and the curve (resp. the point $G$) are given which assumption is quite optimistic (in the viewpoint of the attacker). 

In our cryptographic settings, all the parameters, as the initial point $W_0=(x_0,y_0)$, the generator $G=(x_G,y_G)$, the parameters of the curve $A,B$ and the prime $p$ are assumed to be secret and just the output of the generator $x_1,x_2,\dots$ represented as non-negative integers, are used. The main contribution of this paper is that if some consecutive elements  are revealed, then one can compute in polynomial time (polynomial in $\log p$) an elliptic curve congruential generator (where the curve possibly defined over the rational or over a residue ring) such that the generated sequence is identical to $(x_n)$ in the revealed segment. It turns out that in practice, all the secret parameters, and thus the whole sequence $(x_n)$, can be computed from eight consecutive elements.

The result suggests that for cryptographic applications the elliptic curve congruential generator should be used with great care.

In Section~\ref{sec:background} we summarize some basic facts about elliptic curves. In Section~\ref{sec:main} we present the algorithm, and in Section \ref{sec:num} we discuss results of numerical tests.

\section{Background}\label{sec:background}

In this section we summarize some basic facts about elliptic curves. First we recall the definition of the group operation $\oplus$ of $\E$ defined over arbitrary field $k$. Then we extend the notion of elliptic curve to the case where it is defined over a ring.

\subsection{The group law on the curve}
Let $k$ be a field of characteristic different from $2,3$. Let $\E$ be an elliptic curve defined over $k$ given by an affine Weierstrass equation \eqref{eq:weierstrass} with $A,B\in k$, $4A^3+27B^2\neq 0$. The group operation $\oplus$ in $\E$ is defined in the following way.

\begin{definition}\label{def:1}
The operation $\oplus$ is defined over $\E$ as follows. If $P=(x_P,y_P)$ and $Q=(x_Q,y_Q)$ are in $\E(k)$, then
\[
 P\oplus Q=R=(x_R,y_R),
\]
where
\begin{enumerate}[(i)]
 \item \label{def:-i} if $x_P\neq x_Q$, then
 \[
  x_R=s^2-x_P-x_Q, \quad y_R=s(x_P-x_R)-y_P, \quad \text{where } s=\frac{y_Q-y_P}{x_Q-x_P};
 \]
 \item \label{def:-ii} if $x_P=  x_Q$ but $y_P\neq y_Q$, then $P\oplus Q=\mathcal{O}$;
 \item  if $P=Q$ and $y_P\neq 0$, then
 \[
  x_R=s^2-2x_P, \quad y_R=s(x_P-x_R)-y_P, \quad \text{where } s=\frac{3x_P^2+A}{2y_P};
 \]
 \item  if $P=Q$ and $y_P= 0$, then $P\oplus Q=\mathcal{O}$.
\end{enumerate}

\end{definition}
\subsection{Elliptic curves over $\Z_m$}

If $m$ is a composite integer with $\gcd(m,6)=1$, elliptic curve $\E$ can be also defined over $\Z_m$ via the \emph{projective Weierstrass equation}
\begin{equation*}
 y^2z=x^3+Axz^2+Bz^3
\end{equation*}
with $A,B\in\Z_m$, $\gcd(4A^3+27B^2,m)=1$. The $\Z_m$-rational points $\E(\Z_m)$ of $\E$ with projective coordinates can be represented as a triple $(x:y:z)$ such that $\gcd(m,x,y,z)=1$ (but not necessarily $z=0$ or $1$).

As in the field case, group operation can be defined on $\E$ (see \cite{lenstra,washington}) whose formulas correspond to Definition \ref{def:1}  if the divisor in (\ref{def:-i}) or (\ref{def:-ii})  is co-prime to $m$.

We remark that for integers $m_1,m_2$, $\gcd(m_1,m_2)=1$ we have
\[
 \E(\Z_{m_1m_2})\cong \E(\Z_{m_1}) \otimes \E(\Z_{m_2})
\]
as groups. Moreover, if $\E$ is an elliptic curve over $\Q$ defined by \eqref{eq:weierstrass} with integers $A,B$, then for  $m$ with $\gcd(4A^3+27B^2,m)=1$, the map $ \E(\Q)  \rightarrow \E(\Z_m)$ defined by
\[
 (x:y:z) \mapsto  (x \bmod m:y\bmod m:z\bmod m)  
\]
is a group homomorphism (where the representation $(x:y:z)$ is chosen as $x,y,z\in\Z$ and $\gcd(x,y,z)=1$).

Finally, for arbitrary integers $m_1,m_2$ (not necessarily co-primes), the the map $\E(\Z_{m_1m_2}) \rightarrow \E(\Z_{m_1})$ defined by
\[
 (x:y:z) \mapsto  (x \bmod m_2:y\bmod m_2:z\bmod m_2 ) 
\]
is a group homomorphism.

\section{Predicting the congruential generator on elliptic curve over rings}\label{sec:main}

Suppose we are given an initial segment $x_1,\dots, x_s$ of a sequence $(x_n)$ generated by an elliptic curve generator as non-negative integers. We would like to predict the remainder part of this sequence and specially, to compute the parameters of the generator, namely, the prime $p$, the parameters of the curve $A$, $B$ and the points $G$, $W_0$.

If for two different generators with primes $p$ and $q$, the revealed initial segments coincide, then the same initial segment is generated by an elliptic curve generator over $\Z_{p\cdot q}$. 
Clearly, in this case only the generator over the ring $\Z_{p\cdot q}$ is computable (without assuming the easiness of the integer factorization problem) and to recover the private parameters further  revealed elements are needed.

On the other hand, if 
the curve $\E$ is defined over $\Q$ with non-negative integers $A,B$ and $G,W_0\in\E(\Q)$ are points such that $x(iG\oplus W_0)$ are all integers for $i=1,\dots, s$, then there are infinitely many possible primes $p$ (and generators) exist, namely all large enough primes are suitable.

Thus our aim is to determine the most general elliptic curve generator (possibly over $\Q$ or over a ring $\Z_m$) which generates the same initial segment.

The following theorem shows that if at least seven initial values are revealed, then it can be computed 
a curve $\E$ over $\Q$ or over $\Z_m$ with $p\mid m$ and points $G$, $W_0$ such that these revealed values are the initial segment of a sequence generated by an elliptic curve generator with $\E$, $G$ and $W_0$. If more values are revealed, then a better approximation can be given to the generator (i.e. to the prime $p$).

\begin{thm}\label{thm:1}
There is an algorithm such that for given pairwise distinct non-negative integers $x_1,\dots, x_7$ generated by the elliptic curve congruential generator \eqref{eq:def-x} with prime number $p>3$, non-singular elliptic curve $\E=\E_{A,B}$ ($A,B\in\fp$), and points $W_0,G\in\ecp$ with $W_n\neq \pm G$ for $n=1,\dots, 6$, the algorithm computes a curve $\E_{\widetilde{A},\widetilde{B}}$ over $\Q$ or over $\Z_m$ with $p\mid m$ and a pair $(\widetilde{x_G}, \widetilde{(y_G^2)})$ in polynomial time (polynomial in $\log p$) such that if 
\begin{equation}\label{eq:gen}
\widetilde{x}_{n}=2\frac{\widetilde{x}_{n-1}^3+\widetilde{A} \widetilde{x}_{n-1}+\widetilde{B}+\widetilde{\left(y_G^2\right)}}{(\widetilde{x}_{n-1}-\widetilde{x_G})^2}-2(\widetilde{x}_{n-1}+\widetilde{x_G})-\widetilde{x}_{n-2}, \quad n=2,3,\dots,
\end{equation}
then $\widetilde{x}_{n}\equiv x_n \mod p$ whenever $W_n\neq \mathcal{O}$.
\end{thm}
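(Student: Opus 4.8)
The plan is to set up an explicit addition-chain identity for the $x$-coordinates, linearize it in the unknown parameters, and then use lattice/linear-algebra reconstruction over $\mathbb{Z}$ followed by reduction modulo a suitable $m$.

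First I would derive the basic recurrence that the true sequence satisfies. Writing $W_n = nG \oplus W_0$, we have $W_{n+1} = W_n \oplus G$ and $W_{n-1} = W_n \oplus (\ominus G)$. Applying the addition formula of Definition~\ref{def:1}\eqref{def:-i} to both (legitimate since $W_n \neq \pm G$, so the $x$-coordinates differ), and adding the two resulting expressions for $x(W_{n+1})$ and $x(W_{n-1})$, the slope terms $s = (y_{\pm G} - y_{W_n})/(x_{\pm G} - x_{W_n})$ combine: the cross terms in $y_{W_n}$ cancel because $y_{\ominus G} = -y_G$, and one is left with an identity that involves only $x(W_n)$, $x_G$, $y_G^2$ (not $y_G$ itself), and the curve parameters $A, B$ — because $y_{W_n}^2$ can be eliminated via the Weierstrass equation $y_{W_n}^2 = x(W_n)^3 + A\,x(W_n) + B$. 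This is exactly the shape of \eqref{eq:gen}: it expresses $x_{n+1} + x_{n-1}$ rationally in $x_n$ with coefficients built from $A$, $B$, $x_G$, and $y_G^2$. So the true $(x_n)$ satisfies \eqref{eq:gen} modulo $p$ with the genuine parameters.

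Next I would observe that \eqref{eq:gen}, after clearing the denominator $(x_{n-1}-\widetilde{x_G})^2$, is \emph{linear} in the four unknowns $\widetilde A$, $\widetilde B$, $\widetilde{(y_G^2)}$, and in the quantities $\widetilde{x_G}$, $\widetilde{x_G}^2$ (treated as independent unknowns, say $u = \widetilde{x_G}$ and $v = \widetilde{x_G}^2$) — i.e.\ it is linear in five unknowns $(\widetilde A, \widetilde B, \widetilde{(y_G^2)}, u, v)$ with coefficients that are explicit polynomials in the known integers $x_1,\dots,x_7$. Each consecutive triple $(x_{n-1}, x_n, x_{n+1})$ for $n = 2,3,4,5,6$ gives one such linear equation valid mod $p$; that is five equations. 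Over $\fp$ these are consistent (the true parameters solve them). The algorithm instead solves the corresponding integer linear system: it finds an integer vector in (a scaled version of) the kernel/solution set — e.g.\ by computing the Hermite normal form or by a short-vector computation — obtaining rational or integral candidates $\widetilde A, \widetilde B, \widetilde{(y_G^2)}, u, v$ together with a modulus $m$ (the relevant determinant / gcd of maximal minors) such that the system holds mod $m$ and $p \mid m$. One then sets $\widetilde{x_G}$ from $u$ (consistency of $v = u^2$ is automatic mod $p$, hence mod the part of $m$ we care about, or is imposed as an extra relation refining $m$). If the resulting coefficients are genuinely integers with no obstruction, the curve is defined over $\Q$; otherwise it is read off over $\Z_m$. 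Finally, unwinding: with these $\widetilde A, \widetilde B, \widetilde{(y_G^2)}, \widetilde{x_G}$, the recurrence \eqref{eq:gen} is, by construction, the reduction mod $p$ of the true relation, so an easy induction on $n$ (base cases $\widetilde x_1 = x_1$, $\widetilde x_2 = x_2$, which we take as given, and whenever $W_n \neq \mathcal O$ the denominator is a unit mod $p$ by the hypothesis $W_n \neq \pm G$) gives $\widetilde x_n \equiv x_n \bmod p$.

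The main obstacle I expect is \textbf{controlling the linear-algebra step so that the output is provably small and the recovered modulus is provably divisible by $p$}: one must show the $5 \times 5$ (or suitably padded) coefficient matrix built from $x_1,\dots,x_7$ has, over $\Q$, a solution space on which an integer point of size polynomial in $p$ exists, and that the natural modulus produced — a minor or a gcd — is a nonzero multiple of $p$ rather than accidentally $1$ (which would force the curve over $\Q$ incorrectly) or some proper divisor that breaks the induction. Handling the genuinely degenerate configurations — where some needed $x$-coordinates coincide, or $W_n = \pm G$, or $W_n = \mathcal O$ for small $n$, or where fewer than five of the linear equations are independent — requires the hypotheses $W_n \neq \pm G$ for $n=1,\dots,6$ and the pairwise distinctness of $x_1,\dots,x_7$, and is where the bookkeeping is most delicate; the exceptional point $W_n = \mathcal O$ is simply excluded from the conclusion. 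The rest (deriving the symmetric-sum identity, the linearity claim, and the final induction) is routine once the first step is written out carefully.
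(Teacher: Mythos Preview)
Your plan is essentially the paper's proof: the same symmetric identity obtained from $W_n\oplus G$ and $W_n\oplus(-G)$, the same linearization after clearing $(x_n-x_G)^2$, five linear relations coming from the five consecutive triples $(x_{n-1},x_n,x_{n+1})$ for $n=2,\dots,6$, and extraction of $m$ as a determinant followed by gcd refinements, with the side condition $u^2\equiv v$ used to sharpen $m$. One bookkeeping correction: after clearing denominators a term $-2x_G^3$ also survives, so the system is not linear in your five unknowns as stated. The paper carries six formal unknowns $(x_G,x_G^2,A,B,y_G^2,x_G^3)$; the coefficient columns of $B$, $y_G^2$, $x_G^3$ are all scalar multiples of the all-ones vector, so only four columns are independent and only the combination $\lambda_4=B+y_G^2-x_G^3$ is determined by the linear system. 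The paper then invokes the Weierstrass relation $y_G^2=x_G^3+Ax_G+B$ a second time to split $\lambda_4$ into $\widetilde B=(\lambda_4-\lambda_1\lambda_3)/2$ and $\widetilde{(y_G^2)}=\lambda_1^3+(\lambda_4+\lambda_1\lambda_3)/2$; you should make this separation step explicit.

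The obstacle you single out is precisely the content of the paper's Lemma~\ref{lemma:1}, and it is the only non-routine step. The paper proves it by contradiction: a nontrivial $\fp$-dependence among the four effective columns, after substituting the recurrence \eqref{eq:x-sum} back in to eliminate $x_{i-1}+x_{i+1}$ and clearing denominators, produces a polynomial $F(X)\in\fp[X]$ of degree at most $4$ that vanishes at the five distinct values $x_2,\dots,x_6$ (this is where the pairwise-distinctness hypothesis is used), hence $F\equiv 0$. Reading off the coefficients of $F$ in decreasing degree forces first $\alpha_1=\alpha_3=0$, then $3x_G^2+A=0$ and $2x_G^3-B-y_G^2=0$; combined with $y_G^2=x_G^3+Ax_G+B$ this gives $y_G=0$ and $3x_G^2+A=0$, so $x_G$ is a double root of $X^3+AX+B$, contradicting non-singularity of $\E$. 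Once you supply this argument, your outline is a complete proof; the rest of the paper's argument (the case split $\det=0$ versus $\det\neq 0$, the iterated gcd to make the columns independent modulo $m$, and the final induction) matches what you sketched.
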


We summarize the steps of computing in Algorithm \ref{alg:1}.

%

\renewcommand{\algorithmicrequire}{\textbf{Input:}}
\renewcommand{\algorithmicensure}{\textbf{Output:}}
\begin{algorithm}[ht]
\caption{Predicting the EC-LCG}\label{alg:1}

\begin{algorithmic}[1]
\Require non-negative integers $x_1,\dots, x_7$ generated by an elliptic curve congruential generator \eqref{eq:def-x}
\Ensure $\left(\widetilde{x},\widetilde{\left(y^2\right)},\widetilde{A},\widetilde{B},m\right)$ such that $p\mid m$, and if   $\widetilde{x}_n$ ($n=2,3,\dots$) generated by \eqref{eq:gen}, then $\widetilde{x}_n\equiv x_n \mod p$ whenever $W_n\neq \mathcal{O}$.
  \State define the vectors $\c_1,\c_2,\c_3,\c_4$ as the columns of the matrix $C$ in \eqref{eq:C} and $\mathbf{u}$ as \eqref{eq:u}
 \State $m\leftarrow\det(\c_1,\c_2,\c_3,\c_4,\mathbf{u})$.

 \If {$m=0$}
  \State write $\lambda_1\c_1+\lambda_2\c_2+\lambda_3\c_3+\lambda_4\c_4= \mu \mathbf{u}$ with $\lambda_1,\dots, \lambda_4,\mu\in\mathbb{Z}$, $\mu>0$
  \State $m\leftarrow(\lambda_1^2-\lambda_2\mu)/\gcd(\lambda_1^2,\mu)$
  
  \If {$m=0$}
    \State $ \widetilde{x}\leftarrow\frac{\lambda_1}{\mu}$, $\widetilde{\left(y^2\right)}\leftarrow \left(\frac{\lambda_1}{\mu}\right)^3+ \frac{\lambda_4}{2\mu}+\frac{\lambda_1\lambda_3}{2\mu^2}$,$\widetilde{A}\leftarrow\frac{\lambda_3}{\mu}$, $\widetilde{B}\leftarrow\frac{\lambda_4}{2\mu}-\frac{\lambda_1\lambda_3}{2\mu^2}$ 
  \EndIf
  
 \EndIf
\If {$m\neq 0$}
  \While {$\c_1$, $\c_2$, $\c_3$, $\c_4$ are linearly dependent modulo $m$}
  \State write $\lambda_1\c_1+\lambda_2\c_2+\lambda_3\c_3+\lambda_4\c_4\equiv 0 \mod m$ with $(\lambda_1,\lambda_2,\lambda_3,\lambda_4)\neq (0,0,0,0)$, 
  \State $m \leftarrow\gcd(m,\lambda_1,\lambda_2,\lambda_4,\lambda_5)$
 \EndWhile
 \State write $\lambda_1\c_1+\lambda_2\c_2+\lambda_3\c_3+\lambda_4\c_4\equiv \mu \mathbf{u}  \mod m$ with  $\mu>0$
  \If {$\gcd(m,\mu)>0$}
  \State $m\leftarrow m/\gcd(m,\mu)$
 \EndIf 
  
    \State write $\lambda_1\c_1+\lambda_2\c_2+\lambda_3\c_3+\lambda_4\c_4\equiv \mathbf{u}  \mod m$
    \If {$\lambda_1^2\not\equiv \lambda_2 \mod m$}
      \State $m\leftarrow\gcd(m,\lambda_1^2- \lambda_2 )$
    \EndIf
    \State $\widetilde{x}\leftarrow\lambda_1 \bmod m$, $\widetilde{\left(y^2\right)}\leftarrow \lambda_1^3+ \frac{\lambda_4+\lambda_1\lambda_3}{2} \bmod m$, $\widetilde{A}\leftarrow\lambda_3\bmod m$, 
    \State $\widetilde{B}\leftarrow\frac{\lambda_4-\lambda_1\lambda_3}{2}\bmod m$
 \EndIf
 \State \Return  $\left(\widetilde{x},\widetilde{\left(y^2\right)},\widetilde{A},\widetilde{B},m\right)$
\end{algorithmic}
\end{algorithm}

\begin{remark}
 The assumption that we are given the initial segment of the sequence $(x_n)$ is just a technical simplification. If any segment $x_{k},\dots, x_{k+6}$ of length seven are given, substituting $W_0$ by $W_{k-1}$ we get a generator whose initial values are $x_{k},\dots, x_{k+6}$.
\end{remark}

\begin{remark}
If we increase the number of revealed elements, then  the algorithm can be extended which provides  a better approximation $m$ to $p$ and hence a better approximation to $(x_{n})$, see Section \ref{sec:num}.
In practice, $8$ elements contain enough information to learn the exact value of $p$.  
\end{remark}


\begin{proof}[Proof of Theorem \ref{thm:1}]
Let us assume, that the integers $x_1,\dots, x_7$ generated by an elliptic curve congruential generator \eqref{eq:def-x} are given.

By \eqref{eq:def-x}, we can write
\[
 x_{i-1}=x(W_i\oplus (-G)), \quad  x_{i}=x(W_i), \quad x_{i+1}=x(W_i\oplus G), \quad i=2,\dots, 6,
\]
where $-G$ is the (additive) inverse of $G$: $-G=(x_G,-y_G)$.
By the addition law, by the assumption that $W_i\neq \pm G$ ($i=1,\dots, 6$) and by \eqref{eq:weierstrass} we have
\begin{align}\label{eq:x-sum}
 x_{i-1}+x_{i+1}&=\left(\frac{y_i+y}{x_i-x_G}\right)^2-x_i-x_G+\left(\frac{y_i-y}{x_i-x}\right)^2-x_i-x_G \\ \notag
                &=2\frac{y_i^2+y^2}{(x_i-x_G)^2}-2(x_i+x_G)\\
                &=2\frac{x_i^3+Ax_i+B+y_G^2}{(x_i-x_G)^2}-2(x_i+x_G),
                 \quad i=2,\dots, 6 \notag 
\end{align}
in $\fp$. Thus
\begin{multline*}
(x_i-x_G)^2(x_{i-1}+x_{i+1})\equiv 2(x_i^3+Ax_i+B+y_G^2)-2(x_i+x_G)(x_i-x_G)^2 \mod p, \\ i=2,\dots, 6 
\end{multline*}
i.e.,
\begin{multline}\label{eq:eqsys}
(2x_i^2+2x_i(x_{i-1}+x_{i+1}))x_G+(2x_i-(x_{i-1}+x_{i+1}))x_G^2+2x_iA+2B +2y_G^2  -2x_G^3\\
\equiv (x_{i-1}+x_{i+1})x_i^2 \mod p, \quad i=2,\dots, 6. 
\end{multline}
Put
\begin{equation}\label{eq:C}
C= 
\left(\!
\renewcommand{\arraystretch}{1.6}
\begin{array}{c@{\hspace{10pt}}c@{\hspace{10pt}}c@{\hspace{10pt}}c@{\hspace{10pt}}c@{\hspace{10pt}}c}
2x_2^2+2x_2 (x_1+x_3) & 2x_2-(x_1+x_3) & 2x_2 & 2 & 2 & -2\\
2x_3^2+2x_3 (x_2+x_4) & 2x_3-(x_2+x_4) & 2x_3 & 2 & 2 & -2\\
2x_4^2+2x_4 (x_3+x_5) & 2x_4-(x_3+x_5) & 2x_4 & 2 & 2 & -2\\
2x_5^2+2x_5 (x_4+x_6) & 2x_5-(x_4+x_6) & 2x_5 & 2 & 2 & -2\\
2x_6^2+2x_6 (x_5+x_7) & 2x_6-(x_5+x_7) & 2x_6 & 2 & 2 & -2\\
\end{array}\!
\right)\in\Q^{5 \times 6}
\end{equation}
and
\begin{equation}\label{eq:u}
\mathbf{u}= \left(\!
\renewcommand{\arraystretch}{1.6}
\begin{array}{c}
(x_1+x_3)x_2^2\\
(x_2+x_4)x_3^2\\
(x_3+x_5)x_4^2\\
(x_4+x_6)x_5^2\\
(x_5+x_7)x_6^2\\
\end{array}\!
\right)\in\Q^{5}.
\end{equation}

Write $C=(\c_1,\dots, \c_6)$ with $\c_1,\dots, \c_6 \in\Z^5$. Then we have
\begin{lemma}\label{lemma:1}
Having the same assumption as in Theorem \ref{thm:1}, the columns $\c_1$, $\c_2$, $\c_3$, $\c_4$ are linearly independent over $\fp$. 
\end{lemma}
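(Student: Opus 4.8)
The plan is to suppose a relation $\lambda_1\c_1+\lambda_2\c_2+\lambda_3\c_3+\lambda_4\c_4\equiv\mathbf{0}\pmod p$ with $\lambda_1,\dots,\lambda_4\in\fp$ and to force every $\lambda_j$ to vanish. Let $s(T)\in\fp(T)$ be the rational function obtained from the right-hand side of \eqref{eq:x-sum} by replacing $x_i$ with the variable $T$, so that $x_{i-1}+x_{i+1}=s(x_i)$ for $i=2,\dots,6$; this is legitimate because $W_i\neq\pm G$ forces $x_i\neq x_G$, so $s$ has no pole at $x_i$. Then the $i$-th coordinate of the assumed relation reads $2\lambda_1x_i^2+2\lambda_1x_i\,s(x_i)+2\lambda_2x_i-\lambda_2 s(x_i)+2\lambda_3x_i+2\lambda_4\equiv0$, and multiplying by $(x_i-x_G)^2$ shows that each of $x_2,\dots,x_6$ is a root of the single polynomial
\[
P(T)=(T-x_G)^2\bigl(2\lambda_1T^2+2(\lambda_2+\lambda_3)T+2\lambda_4\bigr)+\bigl(2\lambda_1T-\lambda_2\bigr)\bigl((T-x_G)^2s(T)\bigr)\in\fp[T].
\]

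The next step is the degree bound. Here one records the mild but essential cancellation $(T-x_G)^2s(T)=2x_GT^2+2(A+x_G^2)T+2(B+y_G^2-x_G^3)$ --- the cubic contribution to $s$ disappears, so this factor is only quadratic and $\deg P\le 4$. Since $x_1,\dots,x_7$, and in particular $x_2,\dots,x_6$, are pairwise distinct, $P$ has at least five distinct roots in $\fp$, hence $P\equiv0$.

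Comparing the coefficients of $P$ from the top then unwinds everything. The coefficient of $T^4$ is $2\lambda_1$, so $\lambda_1=0$ (recall $p>2$); with $\lambda_1=0$ the coefficient of $T^3$ is $2(\lambda_2+\lambda_3)$, so $\lambda_3=-\lambda_2$; the coefficient of $T^2$ is then $2(\lambda_4-\lambda_2x_G)$, so $\lambda_4=\lambda_2x_G$; and the coefficients of $T$ and of $1$ reduce to $-2\lambda_2(A+3x_G^2)=0$ and $-2\lambda_2(B+y_G^2-2x_G^3)=0$.

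Finally one excludes $\lambda_2\neq0$. If $\lambda_2\neq0$, then $A=-3x_G^2$ and $B=2x_G^3-y_G^2$; substituting into the Weierstrass relation $y_G^2=x_G^3+Ax_G+B$ satisfied by $G$ gives $y_G^2=-y_G^2$, hence $y_G=0$ and $B=2x_G^3$, so that $4A^3+27B^2=-108x_G^6+108x_G^6=0$, contradicting the non-singularity of $\E$. Therefore $\lambda_2=0$, and then $\lambda_3=\lambda_4=0$ as well, which proves the claim. The one genuinely delicate point is setting up $P$ and getting its coefficients right --- in particular noticing that the top-degree terms of $s(T)$ collapse so that $\deg P\le 4$; the rest is routine bookkeeping together with the discriminant contradiction.
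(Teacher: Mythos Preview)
Your proof is correct and follows essentially the same strategy as the paper's: substitute the expression \eqref{eq:x-sum} for $x_{i-1}+x_{i+1}$ into the assumed linear relation, clear denominators to obtain a polynomial of degree at most~$4$ with the five distinct roots $x_2,\dots,x_6$, force it to vanish identically, and then read off the coefficients to reach a contradiction with the non-singularity of~$\E$. The only cosmetic difference is that the paper first replaces $\c_1,\dots,\c_4$ by an equivalent spanning set $\mathbf{v}_1,\dots,\mathbf{v}_4$ before setting up the polynomial, which makes the degree-$4$ and degree-$3$ coefficients give $\alpha_1=\alpha_3=0$ directly, whereas you obtain $\lambda_1=0$ and $\lambda_3=-\lambda_2$; the remaining coefficient analysis and the discriminant contradiction are identical.
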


Assuming Lemma \ref{lemma:1}, the matrix $C$ has rank 4 over $\fp$, and by \eqref{eq:eqsys} the congruence
\begin{equation}\label{eq:C-modp}
 C \cdot \mathbf{e}
 \equiv\mathbf{u}
\mod p
\end{equation}
has the  solution  $\mathbf{e}=(x_G,x_G^2, A,B,y_G^2,x_G^3)^T$.

The algorithm looks for an
integer $m$ such that the congruence \eqref{eq:C-modp} has a solution $\mathbf{e}=(e_1,\dots, e_6)^T$ modulo $m$ with the additional restriction $e_1^2 \equiv e_2 \mod m$. During the algorithm we will always have  $p\mid m$. If $m$ takes a finite value, then we can suppose that $\gcd(m,6)=1$, since $p>3$. Moreover, $a\equiv b \mod m$ for $m=0$, means that $a=b$ as rationals.

Since the congruence \eqref{eq:C-modp} has solutions, $\det(\c_1,\c_2,\c_3,\c_4,\mathbf{u})\equiv 0 \mod p$.
First, assume that $\det(\c_1,\c_2,\c_3,\c_4,\mathbf{u})\neq 0$ and put $m=\det(\c_1,\c_2,\c_3,\c_4,\mathbf{u})$.
If $\c_1$, $\c_2$, $\c_3$, $\c_4$ are linearly dependent modulo $m$, say, $\lambda_1\c_1+\lambda_2\c_2+\lambda_3\c_3+\lambda_4\c_4\equiv 0 \mod m$ with $(\lambda_1,\lambda_2,\lambda_3,\lambda_4)\neq (0,0,0,0)$, then we have $p\mid \gcd(\lambda_1,\allowbreak \lambda_2,\allowbreak \lambda_3,\lambda_4)$. Replacing $m$ to $\gcd(m,\lambda_1,\lambda_2,\lambda_4,\lambda_5)$ we reduced $m$ with maintaining the property $p\mid m$. Iterating this step, we can assume that $\c_1$, $\c_2$, $\c_3$, $\c_4$ are linearly independent modulo $m$. 

By the vanishing of the determinant, $\c_1$, $\c_2$, $\c_3$, $\c_4$, $\mathbf{u}$ are linearly dependent modulo $m$: $\lambda_1\c_1+\lambda_2\c_2+\lambda_3\c_3+\lambda_4\c_4\equiv \mu \mathbf{u}  \mod m$. By the independence of $\c_1,\c_2,\c_3,\c_4$, $\mu \not \equiv 0 \mod m$ and if $\mu$ is minimal and positive, the coefficients $\lambda_1$, $\lambda_2$, $\lambda_3$, $\lambda_4$, $\mu$ are unique. If $\gcd(m,\mu)>1$, replacing $m$ to $m/\gcd(m,\mu)$ we can assume $\mu=1$, i.e. $\lambda_1\c_1+\lambda_2\c_2+\lambda_3\c_3+\lambda_4\c_4\equiv \mathbf{u}  \mod m$ with unique $\lambda_1$, $\lambda_2$, $\lambda_3$, $\lambda_4$. Finally, all solutions of the congruence
\begin{equation}\label{eq:eqsys-modm}
  C \cdot \mathbf{e}
 \equiv\mathbf{u}
\mod m
\end{equation}
can be expressed as
\begin{equation}\label{eq:eqsys-gen}
 e_1=\lambda_1, \quad e_2=\lambda_2, \quad e_3=\lambda_3, \quad e_4+e_5-e_6=\lambda_4.
\end{equation}
If $\lambda_1^2\not\equiv \lambda_2 \mod m$, we have to replace $m$ to $\gcd(m,\lambda_1^2- \lambda_2 )$.

Next, consider the case when  $\det(\c_1,\c_2,\c_3,\c_4,\mathbf{u})= 0$ (over $\Q$). Now, the equation
\[
   C \cdot \mathbf{e} = \mathbf{u}
\]
has solutions over $\Q$. By Lemma \ref{lemma:1}, $\c_1,\c_2,\c_3,\c_4$ are linearly independent over $\fp$ and thus over $\Q$. As before 
\begin{equation}\label{eq:solutionQQ}
\lambda_1\c_1+\lambda_2\c_2+\lambda_3\c_3+\lambda_4\c_4= \mu \mathbf{u}, \quad \gcd(\lambda_1,\lambda_2,\lambda_3,\lambda_4,\mu)=1, \quad \mu>0   
\end{equation}
with unique integers $\lambda_1,\allowbreak \lambda_2,\lambda_3, \allowbreak \lambda_4,\mu$. Moreover, every  integer solution $(\lambda_1',\allowbreak \lambda_2',\allowbreak \lambda_3',\allowbreak \lambda_4',\mu')$ of \eqref{eq:solutionQQ} has the form $\gamma\cdot (\lambda_1,\lambda_2,\lambda_3,\lambda_4,\mu)$ with some $\gamma\in\Z$.   Since $(x,x^2,A,B+y^2-x^3,1)$ is a solution modulo $p$ we have for $\gcd(p,\mu)=1$, that $(\lambda_1/\mu)^2\equiv \lambda_2/\mu \mod p$. Put $m=(\lambda_1^2-\lambda_2\mu)/\gcd(\lambda_1^2,\mu)$ so $p\mid m$. If $m\neq 0$, write the solutions of \eqref{eq:eqsys-modm} in the same form as \eqref{eq:eqsys-gen}.

In both cases, if $m\neq 0$ write 
\[
\widetilde{x_G}=\lambda_1,\quad \widetilde{\left(y_G^2\right)}= \lambda_1^3+ \frac{\lambda_4+\lambda_1\lambda_3}{2},\quad 
\widetilde{A}=\lambda_3,\quad \widetilde{B}=\frac{\lambda_4-\lambda_1\lambda_3}{2} \quad 
\text{over } \Z_m.
\]
Clearly, if $\widetilde{G}=(\widetilde{x_G},\widetilde{y_G})$ with a $\widetilde{y_G}\in\Z_m[\zeta]/(\zeta^2-\widetilde{\left(y_G^2\right)})$,  $\widetilde{y_G}^2=\widetilde{\left(y_G^2\right)}$, then $\widetilde{G}\in\E_{\widetilde{A},\widetilde{B}}$ over $\Z_m[\zeta]/(\zeta^2-\widetilde{\left(y_G^2\right)})$. Moreover, the vector
\[
\e=(\widetilde{x_G}, \widetilde{x_G}^2,  \widetilde{A},\widetilde{B},\widetilde{\left(y_G^2\right)},\widetilde{x_G}^3 )^T \in\Z_m^6                                                                                                                                                                                                                                                                                                                                                                                                                                        
\]
is a solution of \eqref{eq:eqsys-modm}. If $\widetilde{y}_1\in\Z_m[\xi]/(\xi^2-x_1^3-\widetilde{A}x_1-\widetilde{B})$ is an element such that $\widetilde{W_1}=(x_1,\widetilde{y}_1)$ is on the curve $\E_{\widetilde{A},\widetilde{B}}$, then writing $\widetilde{W_0}=\widetilde{W_1}\oplus(-\widetilde{G})$, the sequence $(\widetilde{x}_n)$  generated by the elliptic curve congruence generator (with $\widetilde{W_0}$, $\widetilde{G}$) satisfies \eqref{eq:gen}
so $\widetilde{x}_1, \widetilde{x}_2,\dots$ are in $\Z_m$ and thus $\widetilde{x}_i=x_i$ ($i=1,2,\dots, 7$) as integers.

Finally, if $m=0$, write 
\[
\widetilde{x_G}=\frac{\lambda_1}{\mu},\quad \widetilde{\left(y_G^2\right)}= \left(\frac{\lambda_1}{\mu}\right)^3+ \frac{\lambda_4}{2\mu}+\frac{\lambda_1\lambda_3}{2\mu^2},\quad 
\widetilde{A}=\frac{\lambda_3}{\mu},\quad \widetilde{B}=\frac{\lambda_4}{2\mu}-\frac{\lambda_1\lambda_3}{2\mu^2} 
\]
over $\Q$.
Then $\widetilde{G}=(\widetilde{x_G},\widetilde{y_G})\in\E_{\widetilde{A},\widetilde{B}}$ over $\Q\left(\sqrt{ \widetilde{\left(y^2\right)} }\right)$ and the integer sequence $x_1\dots, x_7$ is generated by $\widetilde{G}$. The set of possible primes $p$ are those ones which $p\nmid \mu$ and $p>\max\{ x_i: \ i=1,\dots, 7\}$.
\end{proof}

Finally, it is remain to prove Lemma \ref{lemma:1}.

\begin{proof}[Proof of Lemma \ref{lemma:1}]
Clearly, it is enough to show that the vectors
\begin{equation*}
\renewcommand{\arraystretch}{1.6}
\mathbf{v}_1=
\left(\!
\begin{array}{c}
 x_2^2+x_2(x_1+x_3)  \\ x_3^2+x_3(x_2+x_4)\\ x_4^2+x_4(x_3+x_5)\\x_5^2+x_5(x_4+x_6) \\ x_{6}^2+x_{6}(x_{5}+x_{7})
\end{array}\!
\right)\!,
\mathbf{v}_2=
\left(\!
\begin{array}{c}
 x_1+x_3  \\ x_2+x_4\\ x_3+x_5\\x_4+x_6 \\ x_{5}+x_{7}
\end{array}\!
\right)\!,
\mathbf{v}_3=
\left(\!
\begin{array}{c}
 x_2\\ x_3\\ x_4\\x_5 \\x_{6}
\end{array}\!
\right)\!,
\mathbf{v}_4=
\left(\!
\begin{array}{c}
 1  \\ 1\\ 1\\1 \\ 1
\end{array}\!
\right)
\end{equation*}
are linearly independent over $\fp$. 

Suppose to the contrary that there are $\alpha_1,\dots, \alpha_4\in\fp$, $(\alpha_1,\dots, \alpha_4)\neq (0,\dots, 0)$ such that
\begin{equation*}
 \alpha_1\mathbf{v}_1+\alpha_2\mathbf{v}_2+\alpha_3\mathbf{v}_3+\alpha_4\mathbf{v}_4=0
\end{equation*}
i.e.,
\begin{equation}\label{eq:linComb-i}
\alpha_1( x_i^2+x_i(x_{i-1}+x_{i+1}))+\alpha_2(x_{i-1}+x_{i+1})+\alpha_3 x_i+\alpha_4=0, \quad i=2,\dots, 6. 
\end{equation}
Substituting \eqref{eq:x-sum} to \eqref{eq:linComb-i} we have
\begin{multline*}
 \alpha_1\left( x_i^2+2x_i\left( \frac{x_i^3+Ax_i+B+y_G^2}{(x_i-x_G)^2}-(x_i+x_G)  \right)    \right)\\+2\alpha_2\left( \frac{x_i^3+Ax_i+B+y_G^2}{(x_i-x_G)^2}-(x_i+x_G)  \right) +\alpha_3 x_i+\alpha_4=0, \quad i=2,\dots, 6.
\end{multline*}
Clearing the denominator we get
\begin{multline*}
\alpha_1\left( x_i^2(x_i-x_G)^2+2x_i\left(x_i^3+Ax_i+B+y_G^2-(x_i+x_G)(x_i-x_G)^2  \right)    \right)\\
+2\alpha_2\left( x_i^3+Ax_i+B+y_G^2-(x_i+x_G)(x_i-x_G)^2  \right)\\
+\alpha_3 x_i(x_i-x_G)^2+\alpha_4(x_i-x_G)^2=0
\end{multline*}
for $i=2,\dots, 6$, which means that the polynomial $F(X)\in\fp[X]$
\begin{align*}
 F(X)&=\\
 &\alpha_1\left( X^2(X-x_G)^2+2X\left(X^3+AX+B+y_G^2-(X+x_G)(X-x_G)^2  \right)\right)\\
 +&2\alpha_2\left( X^3+AX+B+y_G^2(X+x_G)(X-x_G)^2  \right) +\alpha_3 X(X-x_G)^2\\
 +& \alpha_4(X-x_G)^2
\end{align*}
has at least five zeros: $x_i$, $i=2,\dots 6$.

Write $F(X)$ into the following form
\begin{multline}\label{eq:F-reduced}
 F(X)=\alpha_1 \left(X^4 + f_4(X)\right)+\alpha_3 \left(X^3 + f_3(X)\right)+(2\alpha_2x+\alpha_4)X^2\\+ (\alpha_2(2x^2+2A)-2\alpha_4x)X+\alpha_2(-2x^3+2B+2y^2)+\alpha_4x^2,
\end{multline}
where $f_3,f_4\in\fp[X]$ with $\deg f_3<3$  and $\deg f_4<4$.

Since $\deg F\leq 4$, we must have that $F(X)$ is the zero polynomial. In this case we have
\begin{equation}\label{eq:zero-coefficients}
 \alpha_1=0, \quad \alpha_3=0, \quad 2\alpha_2x+\alpha_4=0.
\end{equation}
Then the coefficients of $X$ and 1 in \eqref{eq:F-reduced} are
\[
\alpha_2(3x_G^2+A)=0, \quad \alpha_2(2x_G^3-B-y_G^2)=0.
\]

By \eqref{eq:zero-coefficients} and $(\alpha_1,\dots, \alpha_4)\neq (0,\dots, 0)$ we also have 
$\alpha_2,\alpha_4\neq 0$, thus
\begin{equation*}
 3x_G^2+A=0, \quad 2x_G^3-B-y_G^2=0,
\end{equation*}
whence using \eqref{eq:weierstrass} we get
\begin{equation*}
 2y_G^2=x(3x_G^2+A)-(x_G^3+Ax_G+B-y_G^2)-(2x_G^3-B-y_G^2)=0,
\end{equation*}
thus
\begin{equation*}
 y_G=0,\quad 3x_G^2+A=0.
\end{equation*}
Since $G=(x_G,y_G)\in\ecp$, we get that $x_G$ is a multiple root of the right hand side of \eqref{eq:weierstrass}, which contradict that the discriminant of the curve is non-zero. 
\end{proof}

\section{Numerical tests}\label{sec:num}
I have implemented the algorithm of Theorem \ref{thm:1} in SAGE. The algorithm have been tested for 1000 random examples of generators with 500-bit primes $p$. For seven revealed sequence elements, the algorithm computed the exact values of the parameters ($p$, $A$, $B$, $G$, $W_0$) in 95,2\% of the cases. In the remainder cases, the algorithm provided a composite integer $m$ and parameters $\widetilde{A}$, $\widetilde{B}$, $\widetilde{G}$, $\widetilde{W_0}$ such that $p\mid m$ and $\widetilde{A}\equiv A$, $\widetilde{B}\equiv B$, $\widetilde{G}\equiv G$, $\widetilde{W_0}\equiv W_0 \mod p$. 

If the number of revealed sequence elements increases, then the algorithm can be modified to become more effective. Namely, if there are eight revealed sequence elements, then applying the algorithm for the first and the last seven elements, two approximation $(m_1,A_1,B_1,G_1)$ and $(m_2,A_2,B_2,G_2)$ are provided. Putting $m=\gcd(m_1,m_2, A_1-A_2,B_1-B_2, x(G_1)-x(G_2))$, $m$ is a better approximation of $p$. Modifying the SAGE program in this way, the algorithm was successful in 100\% of the cases.

\section{Final remarks, open questions}
In this paper we showed that the sequence $x_n=x(nG\oplus W_0)$ ($n=1,2,\dots$) is highly predictable if at least seven consecutive elements are revealed. A natural question is whether one can reduce the number of revealed elements. 

In the literature, the distribution and the linear complexity profile of the general sequence $f(nG\oplus W_0)$ with $f\in\ffp$ have been also studied (where $\ffp$ is the function field of $\ecp$), see \cite{BD,Clegendre,CGP,CLX,HessShparlinski,HHF,Liu,LWZ,merai-1,merai-2}. The predictability of these sequences could be handle for individual functions $f$, but it is not clear whether there is a universal algorithm for all function $f$ (or at least all function $f$ with small degree). 

An other possible question connected to the result is how much information about $x(nG\oplus W_0)$  we really need to recover all the private parameters. 
Gutierrez and Ibeas \cite{GutierrezIbeas} showed that when the prime $p$ and $G$ are known, then the sequence $(W_n)$ is predictable even if just an approximation of $W_n$, $W_{n+1}$ are revealed. However, the problem is still open in the case when the prime is unknown. 
One can also  consider this problem over arbitrary (not prime) finite fields. 
Namely, if the curve is defined over a finite field $\mathbb{F}_{p^s}$ with degree $s>1$, then one can define an integer sequence as $(x_1(nG\oplus W_0),\dots, x_r(nG\oplus W_0))$ $n=1,2,\dots$   where $r\leq s$ and $x_1(P),\dots, x_s(P)$ are the coordinates of $x(P)$ with respect to a fixed basis of $\mathbb{F}_{p^s}$ over $\fp$. Is it possible to recover the whole sequence from an initial segment, at least when the degree $s$ is fixed?  Clearly, the most interesting case when $p=2$ and the generator builds a binary sequence.

\section*{Acknowledgments}
The author would like to thank Harald Niederreiter for suggesting the problem. The authors would also like to thank the reviewers for their perceptive and useful comments that significantly improved the work.

The author is partially supported by the Austrian Science Fund FWF Project F5511-N26 which is part of the Special Research Program "Quasi-Monte Carlo Methods: Theory and Applications" and by Hungarian National Foundation for Scientific Research,Grant No. K100291.

\end{document}